\newcommand{\ie}{\textit{i.e., }}
\theoremstyle{plain}
\newtheorem{theorem}{Theorem}[section]
\newtheorem{proposition}[theorem]{Proposition}
\newtheorem{lemma}[theorem]{Lemma}
\theoremstyle{definition}
\newtheorem{assumption}[theorem]{Assumption}
\theoremstyle{remark}
\icmltitlerunning{Safety Certificate against Latent Variables with Partially Unidentifiable Dynamics}
\begin{document}

\twocolumn[
\icmltitle{Safety Certificate against Latent Variables with Partially Unidentifiable Dynamics}




\begin{icmlauthorlist}
\icmlauthor{Haoming Jing}{comp}
\icmlauthor{Yorie Nakahira}{comp}
\end{icmlauthorlist}

\icmlaffiliation{comp}{Electrical and Computer Engineering Department, Carnegie Mellon University, Pittsburgh, USA}

\icmlcorrespondingauthor{Yorie Nakahira}{ynakahir@andrew.cmu.edu}

\icmlkeywords{Machine Learning, ICML}

\vskip 0.3in
]



\printAffiliationsAndNotice{}  

\begin{abstract}
Many systems contain latent variables that make their dynamics partially unidentifiable or cause distribution shifts in the observed statistics between offline and online data. However, existing control techniques often assume access to complete dynamics or perfect simulators with fully observable states, which are necessary to verify whether the system remains within a safe set (forward invariance) or safe actions are consistently feasible at all times. To address this limitation, we propose a technique for designing probabilistic safety certificates for systems with latent variables. A key technical enabler is the formulation of invariance conditions in probability space, which can be constructed using observed statistics in the presence of distribution shifts due to latent variables. We use this invariance condition to construct a safety certificate that can be implemented efficiently in real-time control. The proposed safety certificate can continuously find feasible actions that control long-term risk to stay within tolerance. Stochastic safe control and (causal) reinforcement learning have been studied in isolation until now. To the best of our knowledge, the proposed work is the first to use causal reinforcement learning to quantify long-term risk for the design of safety certificates. This integration enables safety certificates to efficiently ensure long-term safety in the presence of latent variables. The effectiveness of the proposed safety certificate is demonstrated in numerical simulations.
\end{abstract}

\section{Introduction}

Autonomous control systems must operate safely even in the presence of latent variables. For instance, autonomous ground vehicles must anticipate objects suddenly emerging from behind occlusions or pedestrians unexpectedly changing their intent to cross the road. In such scenarios, risk-critical variables may be unobservable. These \textit{latent} variables can induce distribution shifts between offline and online settings in visible variables and render the partial models associated with the latent variables unidentifiable (see Section~\ref{sec:technical_challenges} for mathematical details). 
Moreover, the effects of latent variables on observed states may remain subtle within a short time horizon, but by the time they become apparent, corrective action may no longer be feasible. For example, if a child suddenly emerges from behind a large bus, the vehicle may not have sufficient time to stop safely. The presence of such irrecoverable states further complicates safety assurance. While myopic safety can often be efficiently enforced, it is insufficient for ensuring long-term safety. However, the complexity of safety assurance increases unfavorably with the time horizon, particularly in the presence of latent variables.

Motivated by these challenges, this paper explores the following research question:

\vspace{-4mm}
\hrulefill

\vspace{-5.5mm}
\hrulefill
\vspace{-3mm}

\textit{How can we efficiently assure long-term safety for stochastic systems in the presence of latent variables, which induce distribution shifts in offline vs. online statistics and partially unobservable dynamics?} 

\vspace{-4.5mm}
\hrulefill

\vspace{-5.5mm}
\hrulefill
\vspace{-2mm}

Existing safety certificates often rely on complete system models or fully observable states to verify whether the system remains within a safe set and whether safe actions exist continuously~\cite {hsu2023safety,wabersich2023data}. These methods commonly determine actions that satisfy forward invariance conditions in the state space~\cite{blanchini1999set}, which require full knowledge of system dynamics and state observability for evaluation. However, many real-world systems do not meet these requirements due to the presence of latent variables.
While myopic controllers are more practical for real-time control with limited onboard resources~\cite{ames2016control,ames2019control}, they may fail to guarantee long-term safety due to the presence of irrecoverable states and the accumulation of tail-risk events over time. On the other hand, achieving long-term safety—especially in systems with latent variables—often demands complex approaches to handle distribution shifts (e.g., reevaluation of long-term probabilities or retraining using new data), which impose additional challenges in real-time control.

In this paper, we focus on safety certificates for stochastic systems with latent variables, uncertainties with unbounded support, and actuation limits. We formulate invariance conditions in probability space in a way that accounts for latent variables and distribution shifts and can be computed using observed statistics. First, we show a relation between risk measures and marginalized value or Q-functions by employing a modified Bellman equation—adapted to account for latent variables. Next, we present the conditions on the control action that are sufficient to manage risk within a specified tolerance, based on invariance conditions in probability space. Based on this relation, we then show action constraints can be obtained from observed statistics, even in the presence of distribution shifts between observed offline vs. online statistics. In particular, action constraints are constructed to assure long-term safety with persistent feasibility, and this construction leverages probabilistic invariance and the inherent conditions satisfied by marginalized value or Q-functions. These conditions are then utilized to design a safety certificate that can be used by a myopic controller to guarantee long-term safety. This approach also allows the design of safety certificates to easily exploit the large existing body of literature from such domains: the risk measure can be evaluated using existing risk quantification methods, and its equivalent form in marginalized value function and Q-function can be evaluated by causal reinforcement learning technique.

\section{Problem Statement}
\subsection{System Model}
We consider a confounded Markov decision process described by the tuple $(\mathcal{X},\mathcal{U},\mathcal{W},\mathcal{P},H)$. Here, $X_t\in\mathcal{X}$\footnote{To avoid confusion with later context, we don't include reward when describing the process. It is a trivial extension to incorporate reward and maximize the reward during control by adding the reward to the optimization objective in \eqref{eq:get_control_from_optimization}.} is the observable (visible) state, $U_t\in\mathcal{U}$ is the action, $W_t\in\mathcal{W}$ is the unobservable latent variable, $\mathcal{P}(X_{t+1},W_{t+1}|X_t,W_t,U_t)$ is the transition kernel that captures the transition dynamics of the system, and $H\in\mathbb{Z}_+$ is the length of the episode.

We make the following assumption about the latent variable. Here, the notation $\perp$ denotes statistical independence.
\begin{assumption}
\label{asp:markovian}
The sequence $\{W_t\}$ satisfies
    \begin{align}
        W_t\perp \{W_\tau\}_{\tau<t},\{X_\tau\}_{\tau<t},\{U_\tau\}_{\tau\leq t}|X_t.
    \end{align}
\end{assumption}
Due to Assumption~\ref{asp:markovian}, the transition kernel admits the decomposition $\mathcal{P}(X_{t+1},W_{t+1}|X_t,W_t,U_t)=\mathcal{P}(W_t|X_t)\mathcal{P}(X_{t+1}|X_t,U_t,W_t)$. This condition gives Markovian property in the observable state $X_t$, \ie 
\begin{align}
    P(X_{t+1}|X_t,U_t)=P(X_{t+1}|\{X_\tau\}_{\tau\leq t},\{U_\tau\}_{\tau\leq t}).
\end{align}

Let $\mathcal{D}=\{\mathcal{D}_1,\mathcal{D}_2,\cdots,\mathcal{D}_{N_{D}}\}$ denote the available offline data. Here, $N_{D}$ is the size of the training dataset, and each individual data $\mathcal{D}_i,i\in\{1,2,\cdots,N_D\}$ contains the sequence of observable state $\{X_t\}_{t\in\{0,1,\cdots,H\}}:=X_{0:H}$ and control action $U_{0:H}$ in an episode. The control actions are generated by a behavioral policy $\pi^b$, \textit{i.e.,} $U_t\sim\pi^b(U_t|X_t,W_t)$, which is assumed to be unknown. Accordingly, in dataset $\mathcal{D}$, the observable state satisfies the following offline statistics:
\begin{align}
\label{eq:offline_statistics}
    &P_{\text{offline}}(X_{t+1}|X_t,U_t)=\nonumber\\
    &\frac{\mathbb{E}_{W_t\sim \mathcal{P}(W_t|X_t)}[\mathcal{P}(X_{t+1}|X_t,U_t,W_t)\pi^b(U_t|X_t,W_t)]}{\mathbb{E}_{W_t\sim \mathcal{P}(W_t|X_t)}[\pi^b(U_t|X_t,W_t)]}.
\end{align}

On the other hand, the online statistics of the observable state $X_t$ is given by
\begin{align}
\label{eq:online_statistics}
    &P_{\text{online}}(X_{t+1}|X_t,U_t):=\nonumber\\
    &\mathbb{E}_{W_t\sim \mathcal{P}(W_t|X_t)}[\mathcal{P}(X_{t+1}|X_t,U_t,W_t)],
\end{align}
where the latent variable $W_t$ is marginalized. The unobservable nature of the latent variable also causes a mismatch between the online statistics \eqref{eq:online_statistics} and the offline statistics \eqref{eq:offline_statistics} of the offline data $\mathcal{D}$. 

Unlike the behavioral policy from the offline setting, in the online setting, a decision policy cannot depend on the latent variable because the latent variable $W$ cannot be observed by an online controller. The online policy is designed to satisfy multifaceted design considerations. Some objectives, such as performance objectives, are captured in a nominal policy $\pi^n$, \textit{i.e.,} $U_t\sim\pi^n(\cdot|X_t)$. The safety objective is ensured by a safety certificate. The safety certificate is represented by a mapping $S:\mathcal{X}\times \mathcal{U}\times \mathbb{Z}\rightarrow \mathbb{R}$, where a control action $U$ is considered to be safe with respect to a state $X$ at time $t$ when the constraint $S(X_t,U_t,t)\geq 0$ is not violated. Here, safety is characterized by an event $C(X_t)$ that can be evaluated by the state $X_t$. For example, a common definition for safety is that the state must remain in a safe set $\mathcal{C}$. In this example, we have $C(X_t)=\{X_t\in\mathcal{C}\}$. The long-term safety with respect to a certain control policy $\pi$ at time $t$ is quantified by the long-term safe probability
\begin{align}
    \mathbb{P}^\pi(C(X_t)\cap C(X_{t+1})\cap \cdots \cap C(X_H)),
\end{align}
where the probability is calculated assuming the use of policy $\pi$ in the closed loop system with the online statistics $P_{\text{online}}$. Here, the policy $\pi$ has the form $\pi:\mathcal{X}\times\mathbb{Z}\rightarrow\Delta (\mathcal{U})$, \textit{i.e.,} a mapping from $x\in\mathcal{X}$ and time $t\in\mathbb{Z}$ to a distribution of control action in the action space $\mathcal{U}$, which we denote as $\Delta (\mathcal{U})$\footnote{In Section~\ref{sec:proposed_method}, we define the augmented state $\hat{Y}$ which captures time, so that the input to the policy $\pi$ is $\hat{Y}$.}. 

In this paper, our goal is to study the design technique for safety certificates that ensure not only that the safety of the immediate future state but also that states irrecoverably leading to risk (no feasible control action leads the system to safe future states) are avoidedr. Specifically, the proposed technique assures that the long-term safe probability conditioned on the initial observable state $X_0$ is not smaller than a threshold $1-\epsilon$ for the entire episode, \textit{i.e.,}
\begin{align}
    \mathbb{P}^{\hat{\pi},\pi}(C(X_t)\cap C(X_{t+1})\cap \cdots \cap C(X_H)|X_0)\geq 1-\epsilon,&\nonumber\\
    \forall t\in\{0,1,\cdots,H\},&\label{eq:safety_objective}
\end{align}
given certain initial conditions. Here, the probability is calculated assuming the use of online policy $\hat{\pi}$ for times $\{0,1,\cdots,t-1\}$ and policy $\pi$ for times $\{t,t+1,\cdots,H-1\}$ in the closed loop system with the online statistics $P_{\text{online}}$. This is achieved by characterizing the action constraints $S(X_t,U_t,t)\geq 0$ that are sufficient to assure the safety objective \eqref{eq:safety_objective} and are continuously feasible at all time $t$.

\subsection{Technical Challenges}
\label{sec:technical_challenges}
Due to the presence of latent variables, samples of \textit{complete} state $\{X_{t},W_{t}\}_t$ cannot be obtained, and thus the underlying transition dynamics $\mathcal{P}(X_{t+1},W_{t+1}|X_t,W_t,U_t)$ is not fully identifiable. On the other hand, one cannot treat the observed statistics (statistics of the observed variable $\{X_{t}\}_t$) as if it is the underlying transition dynamics. This is because the observed statistics has distribution shifts between the offline vs. online settings: \textit{i.e.,} $P_{\text{online}}(X_{t+1}|X_t,U_t)\neq P_{\text{offline}}(X_{t+1}|X_t,U_t)$. Accordingly, the safe probability estimated from observed offline statistics can be misleading (see Appendix~\ref{sec:mismatch_system_example} for an example). This prohibits the use of existing stochastic safe control techniques (see Section~\ref{sec:safe_control} and references therein) that require accurate transition dynamics or a perfect simulator to sample data. We show in Figure~\ref{fig:short_term_safety} and~\ref{fig:long_term_safety} that the safety guarantee from these techniques can fail under this distribution shift.

\subsection{Related Work}
\subsubsection{Safe control}
\label{sec:safe_control}
Many design techniques are developed for the safety certificate in stochastic or deterministic dynamical systems. These techniques include barrier/Lyapunov functions~\citep{clark2021control,wang2021safety,vahs2023belief,jahanshahi2020synthesis,dean2021guaranteeing}, barrier certificates~\citep{prajna2007framework,ahmadi2020control,ahmadi2018verification}, and predictive safety filters~\citep{wabersich2018linear,wabersich2022predictive,wabersich2021predictive,wabersich2021probabilistic}. For partially observed systems with known dynamics, existing literature has proposed robust control barrier functions for deterministic systems with bounded estimation errors~\citep{zhang2022control,dean2021guaranteeing,wang2022observer,qin2022quantifying,zhao2024robust}, as well as control barrier functions and barrier certificates constructed on belief space or estimated state for stochastic systems~\citep{vahs2023belief,ahmadi2020control,carr2023safe,jahanshahi2022compositional,clark2019control,dean2021guaranteeing,jahanshahi2020synthesis,ahmadi2018verification}. When perfect simulators are available, barrier/Lyapunov functions can be designed through optimization problems that check the existence of functions satisfying barrier/Lyapunov function conditions at all states using sampled data~\citep{nejati2023data,anand2023formally,dai2023learning,qin2022sablas,wang2023stochastic,lindemann2021learning,xiao2023barriernet}. 
These techniques are commonly built based forward invariance conditions in the state space\footnote{The system state stays within a certain set if it originated from within the set. The information of the full system state and state dynamics is needed to check this condition.}~\cite{blanchini1999set}. Some techniques also construct safety conditions based on invariance conditions in the probability space to ensure long-term safety~\cite{wang2022myopically,jing2022probabilistic}. These conditions commonly \textit{require complete transition dynamics and fully observable states to evaluate}. It is also difficult to check such conditions using noisy data in stochastic systems or biased offline data resulting from the presence of latent variables. 

Existing work has also studied the techniques to handle distribution shifts. A common approach is avoid distribution shifts by constraining the system to stay within the states with known distribution. For example, Lyapunov density model is used to constrain the state within the distribution sampled in the offline training data~\cite{kang2022lyapunov,castaneda2023distribution,wu2023risk}. Another approach assumes all possible transition distribution is known or can be samples, or policies or safety certificates for all possible distributions can be sampled in advanced. For instance, meta-learning is used to learn effective policies under different distributions~\cite{guan2024cost,richards2023control}. Our problems differ from that from the first approach in the sense that the distribution shifts (arising from latent variables) cannot be avoided. Our problem differ from that of the second approach in the sense that do not have access to offline samples for all possible statistics of observed variables, and data with latent variables are not accessible.

\subsubsection{Causal Reinforcement Learning}
There has been extensive work in causal reinforcement learning that aim to address the biasing (confounding) effect of the latent variables. Existing works have developed estimation methods for value function and/or Q function in the context of confounded Markov decision process~\citep{wang2021provably,chen2022instrumental,bennett2021off,shi2024off,fu2022offline,xu2023instrumental} and partially observable Markov decision process~\citep{bennett2024proximal,miao2022off,shi2022minimax}. Many algorithms are developed for settings with different available information, such as the availability of backdoor/frontdoor adjustment variable~\citep{wang2021provably}, proxy variables~\citep{miao2022off,bennett2024proximal}, and instrumental variables~\citep{chen2022instrumental,fu2022offline}. While these approaches offer techniques to manage latent variables in diverse settings, to the best of our knowledge, \textit{none have been applied to stochastic safe control problems with persistent feasibility guarantee.} This paper bridges that gap by introducing a framework that integrates causal reinforcement learning into safety certificate design. Although there exist methods such as \citealt{srinivasan2020learning} that use Q-function to represent safety and integrate safe control with learning, critical assumptions about the existence of actions that brings the state to safety has to be made. Our method does not rely on such assumptions and gives persistent feasibility guarantee on the control policy. While we employ the method of~\citealt{shi2024off} for safe controller design, the proposed framework is expected to generalize to other causal reinforcement learning techniques and their respective settings.

\section{Proposed Method}
\label{sec:proposed_method}
Before introducing the proposed method, we first define a function $\Psi^\pi:\mathcal{X}\times\mathbb{Z}\rightarrow[0,1]$ that captures the long-term safety probability with respect to a policy $\pi$ conditioned on a state $x$:
\begin{align}
    &\Psi^\pi(x,t):=\nonumber\\
    &\mathbb{P}^\pi(C(X_t)\cap C(X_{t+1})\cap \cdots \cap C(X_H)|X_t=x),\label{eq:psi_def}
\end{align}
where $t\leq H$. Here, the probability is evaluated with the assumption that the sequence $X_{t:H}$ has statistics \eqref{eq:online_statistics}. We then define two auxiliary Markov decision processes (MDPs). The first MDP is described by the tuple $(\mathcal{Y},\mathcal{U},\tilde{\mathcal{P}}_{\text{online}},r,H)$, where $\hat{Y}_t:=[\hat{X}^T_t,K_t]^T\in\mathcal{X}\times\mathbb{Z}:=\mathcal{Y}$ is the state, and $U_t\in\mathcal{U}$ is the control action. In this process, the sequence $\hat{X}_{0:H}$ has the online statistics \eqref{eq:online_statistics}  when $C(\hat{X}_t)$ occurs and the transition $\hat{X}_{t+1}=\hat{X}_t$ when $C(\hat{X}_t)$ does not occur, \textit{i.e.,}\footnote{Here, with slight abuse of notation, we use $\tilde{\mathcal{P}}_{\text{online}}(\hat{X}_{t+1}|\hat{X}_t,U_t)=P_{\text{online}}(X_{t+1}|X_t,U_t)$ to represent that, when $\hat{X}_{t}=X_t$, $\hat{X}_{t+1}$ has the same distribution as $X_{t+1}$ when the statistics of $X_{0:H}$ is $P_{\text{online}}$. We use this notation system in \eqref{eq:x_offline_statistics_with_absorption} as well.}
\begin{align}
    &\tilde{\mathcal{P}}_{\text{online}}(\hat{X}_{t+1}|\hat{X}_t,U_t)\nonumber\\
    =&\begin{cases}
         P_{\text{online}}(X_{t+1}|X_t,U_t),&C(\hat{X}_t)\text{ occurs}\\
        \delta(\hat{X}_{t+1}-\hat{X}_t),&C(\hat{X}_t)\text{ doesn't occur},
    \end{cases}\label{eq:x_online_statistics_with_absorption}
\end{align}
where $\delta(\cdot)$ is the Dirac delta function. We define $K_{0:H}$ to be a sequence that captures the remaining time in the episode. Its statistics satisfies $K_{t+1}=K_t-1$, \textit{i.e.,}
\begin{align}
    \tilde{\mathcal{P}}_{\text{online}}(K_{t+1}|K_t)=\delta(K_{t+1}-K_t+1).\label{eq:K_dynamics}
\end{align}
The transition kernel of the MDP is given by $\tilde{\mathcal{P}}_{\text{online}}(\hat{Y}_{t+1}|\hat{Y}_t,U_t)=\tilde{\mathcal{P}}_{\text{online}}(\hat{X}_{t+1}|\hat{X}_t,U_t)\tilde{\mathcal{P}}_{\text{online}}(K_{t+1}|K_t)$. The reward function $r:\mathcal{Y}\rightarrow \{0,1\}$ is defined by
\begin{align}
\label{eq:reward_def}
    r([x^T,k]^T)=\mathbf{1}\{k=0\}\mathbf{1}\{C(x)\},
\end{align}
where $\mathbf{1}\{\mathcal{E}\}$ is the indicator function which takes the value $1$ if event $\mathcal{E}$ occurs and $0$ otherwise. The second MDP is described by the tuple $(\mathcal{Y},\mathcal{U},\tilde{\mathcal{P}}_{\text{offline}},r,H)$, where $\bar{Y}_t:=[\bar{X}^T_t,K_t]^T\in\mathcal{Y}$ is the state. In this process, the sequence $\bar{X}_{0:H}$ has the offline statistics \eqref{eq:offline_statistics}  when $C(\bar{X}_t)$ occurs and the transition $\bar{X}_{t+1}=\bar{X}_t$ when $C(\bar{X}_t)$ does not occur, \textit{i.e.,}
\begin{align}
    &\tilde{\mathcal{P}}_{\text{offline}}(\bar{X}_{t+1}|\bar{X}_t,U_t)\nonumber\\
    =&\begin{cases}
         P_{\text{offline}}(X_{t+1}|X_t,U_t),&C(\bar{X}_t)\text{ occurs}\\
        \delta(\bar{X}_{t+1}-\bar{X}_t),&C(\bar{X}_t)\text{ doesn't occur}.
    \end{cases}\label{eq:x_offline_statistics_with_absorption}
\end{align}
The transition kernel of the MDP is given by $\tilde{\mathcal{P}}_{\text{offline}}(\bar{Y}_{t+1}|\bar{Y}_t,U_t)=\tilde{\mathcal{P}}_{\text{offline}}(\bar{X}_{t+1}|\bar{X}_t,U_t)\tilde{\mathcal{P}}_{\text{offline}}(K_{t+1}|K_t)$, where $\tilde{\mathcal{P}}_{\text{offline}}(K_{t+1}|K_t)=\tilde{\mathcal{P}}_{\text{online}}(K_{t+1}|K_t)$.

The rest of Section~\ref{sec:proposed_method} is organized as follows. In section~\ref{sec:value_function_representation}, we show that certain value function defined for the MDP $(\mathcal{Y},\mathcal{U},\tilde{\mathcal{P}}_{\text{online}},r,H)$ is equal to the function $\Psi^\pi$. In Section~\ref{sec:safety_condition}, we introduce a safety certificate formulated based on the value function and show that the satisfaction of the safety certificate provably ensures the safety objective \eqref{eq:safety_objective}. In Section~\ref{sec:persistent_feasibility}, we propose an equivalent condition to the safety certificate that can be evaluated using certain Q function defined for the MDP $(\mathcal{Y},\mathcal{U},\tilde{\mathcal{P}}_{\text{online}},r,H)$. We then show that there always exists a control action $U_t\in\mathcal{U}$ such that this condition is satisfied. In Section~\ref{sec:proposed_algorithm}, we show that, using offline dataset $\mathcal{D}$, one can obtain offline dataset $\tilde{\mathcal{D}}$ that has sequences with statistics $\tilde{\mathcal{P}}_{\text{offline}}$, which can be used to learn value function and/or Q function defined for the MDP $(\mathcal{Y},\mathcal{U},\tilde{\mathcal{P}}_{\text{online}},r,H)$ with existing causal reinforcement learning methods. We then propose an integrated safe control algorithm using one existing causal reinforcement learning method as an example.

\subsection{Value Function Representation for Long-term Safe Probability}
\label{sec:value_function_representation}
We consider the value function representation inspired by~\cite{hoshino2024physics}. We define the marginalized value function $V:\mathcal{Y}\rightarrow[0,1]$ and the marginalized Q function $Q:\mathcal{Y}\times\mathcal{U}\rightarrow[0,1]$ with respect to the MDP $(\mathcal{Y},\mathcal{U},\tilde{\mathcal{P}}_{\text{online}},r,H)$:
\begin{align}
    V^\pi([x^T,k]^T):=&\mathbb{E}_{\tilde{\mathcal{P}}_{\text{online}}}[\sum_{\tau=0}^{k}r(\hat{Y}_\tau)|\hat{Y}_0=[x^T,k]^T,\pi]\label{eq:v_pi_def}\\
    =&\mathbb{E}_{\tilde{\mathcal{P}}_{\text{online}}}[\sum_{\tau=0}^{\infty}r(\hat{Y}_\tau)|\hat{Y}_0=[x^T,k]^T,\pi]\\
    Q^\pi([x^T,k]^T,u):=&\nonumber\\
    \mathbb{E}_{\tilde{\mathcal{P}}_{\text{online}}}[\sum_{\tau=0}^{k}r&(\hat{Y}_\tau)|\hat{Y}_0=[x^T,k]^T,U_0=u,\pi]\label{eq:q_pi_def}\\
    =\mathbb{E}_{\tilde{\mathcal{P}}_{\text{online}}}[\sum_{\tau=0}^{\infty}r&(\hat{Y}_\tau)|\hat{Y}_0=[x^T,k]^T,U_0=u,\pi].
\end{align}
Here, we may sum to infinity because, given $\hat{Y}_0=[x^T,k]^T$, we have $r(\hat{Y}_\tau)=0$ for all $\tau> k$ due to definition \eqref{eq:reward_def}. Throughout this paper, we use the subscript in the expectation to denote the distribution or the transition kernel where the expectation is taken over.
\begin{proposition}
\label{prop:hoshino}
    Consider the marginalized value function defined in \eqref{eq:v_pi_def} for $\tilde{\mathcal{P}}_{\text{online}}$ and the long-term safe probability defined in \eqref{eq:psi_def} for $P_{\text{online}}$. We have
    \begin{align}
        V^\pi([x^T,k]^T)=\Psi^\pi(x,H-k)
    \end{align}
    for all $x\in\mathcal{X}$ and $k\in\mathbb{Z}$.
\end{proposition}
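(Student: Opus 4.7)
The plan is to unwind the definition of $V^\pi$ by exploiting two structural features of the auxiliary MDP $(\mathcal{Y},\mathcal{U},\tilde{\mathcal{P}}_{\text{online}},r,H)$: (i) the deterministic countdown of $K_t$ given by \eqref{eq:K_dynamics}, and (ii) the absorbing behavior of $\hat{X}_t$ under violation given by \eqref{eq:x_online_statistics_with_absorption}. First I would observe that, conditional on $K_0=k$, one has $K_\tau=k-\tau$ almost surely, so the reward definition \eqref{eq:reward_def} collapses the sum in \eqref{eq:v_pi_def} to a single term:
\begin{align}
V^\pi([x^T,k]^T)=\mathbb{E}_{\tilde{\mathcal{P}}_{\text{online}}}\!\left[r(\hat{Y}_k)\mid \hat{Y}_0=[x^T,k]^T,\pi\right]=\mathbb{P}^\pi_{\tilde{\mathcal{P}}_{\text{online}}}\!\left(C(\hat{X}_k)\mid \hat{X}_0=x, K_0=k\right).
\end{align}

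Next I would use the absorbing property: if $C(\hat{X}_\tau)$ fails for some $\tau\le k$, then $\hat{X}_{\tau+1}=\hat{X}_\tau$ and hence $C$ keeps failing at every later step, so $\{C(\hat{X}_k)\}=\bigcap_{\tau=0}^{k}\{C(\hat{X}_\tau)\}$. Moreover, on the event that $C(\hat{X}_0),\dots,C(\hat{X}_{\tau-1})$ all hold, \eqref{eq:x_online_statistics_with_absorption} says the one-step law of $\hat{X}_\tau$ under $\tilde{\mathcal{P}}_{\text{online}}$ coincides with $P_{\text{online}}(X_\tau\mid X_{\tau-1},U_{\tau-1})$. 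Chaining this one step at a time (a routine induction in $\tau$, absorbing the possibility of violation into the intersection) yields
\begin{align}
V^\pi([x^T,k]^T)=\mathbb{P}^\pi_{P_{\text{online}}}\!\left(\bigcap_{\tau=0}^{k}C(X_\tau)\,\Bigm|\, X_0=x\right),
\end{align}
where at each step the policy is applied with augmented input $[\hat{X}_\tau^{T},k-\tau]^T$.

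Finally I would match this to $\Psi^\pi(x,H-k)$. By \eqref{eq:psi_def}, $\Psi^\pi(x,H-k)$ is the probability of $\bigcap_{t=H-k}^{H}C(X_t)$ given $X_{H-k}=x$, a $(k+1)$-fold intersection. Because the augmented state encodes the remaining horizon through the $K$-component, the map $\tau\mapsto t:=(H-k)+\tau$ identifies the $\tilde{\mathcal{P}}_{\text{online}}$-trajectory with the $P_{\text{online}}$-trajectory and, in particular, matches the policy evaluations $\pi(\,\cdot\mid \hat{X}_\tau,k-\tau)$ with $\pi(\,\cdot\mid X_t,H-t)$. Re-indexing the intersection accordingly gives $V^\pi([x^T,k]^T)=\Psi^\pi(x,H-k)$. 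The main obstacle I anticipate is the careful bookkeeping of this re-indexing and, in particular, verifying that the controlled laws on both sides are identical once the $K$-component of the augmented state is used in place of absolute time; once that equivalence is in place, the remainder reduces to reading off a probability from a deterministic reward sum together with the absorbing identity $\{C(\hat{X}_k)\}=\bigcap_{\tau\le k}\{C(\hat{X}_\tau)\}$.
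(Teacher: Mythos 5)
Your proposal is correct and follows essentially the same route as the paper's proof: the reward collapse to the single term $r(\hat{Y}_k)=\mathbf{1}\{C(\hat{X}_k)\}$, the absorbing identity $\{C(\hat{X}_k)\}=\bigcap_{\tau\le k}\{C(\hat{X}_\tau)\}$, the agreement of the absorbing and non-absorbing trajectory laws on the all-safe event, and the re-indexing $\tau\mapsto (H-k)+\tau$ are exactly the paper's ingredients. The only cosmetic difference is that the paper packages the middle two steps through an explicitly defined intermediate quantity $\tilde{\Psi}^\pi$ (the safe probability under the absorbing statistics) and proves $\Psi^\pi=\tilde{\Psi}^\pi$ and $\tilde{\Psi}^\pi(x,H-k)=V^\pi([x^T,k]^T)$ separately, whereas you chain them directly.
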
 
The proof is given in Appendix~\ref{sec:proof_hoshino}.

\subsection{Safety Condition}
\label{sec:safety_condition}
Here, we present a sufficient condition to satisfy the safety objective \eqref{eq:safety_objective} using the value function representation. We consider the condition
\begin{align}
\label{eq:discounted_condition}
    \mathbb{E}_{\tilde{\mathcal{P}}_{\text{online}}(\hat{Y}_{t+1}|\hat{Y}_t,U_t)}[V^\pi(\hat{Y}_{t+1})|\hat{Y}_t,U_t]- V^\pi(\hat{Y}_t)\geq 0,
\end{align}
where $\hat{Y}_t=[X_{t}^T,H-t]^T,\forall t\in\{0,1,\cdots,H-1\}$.
\begin{theorem}
\label{thm:safety}
    Consider the marginalized value function defined in \eqref{eq:v_pi_def} for $\tilde{\mathcal{P}}_{\text{online}}$ and the long-term safe probability defined in \eqref{eq:psi_def} for $P_{\text{online}}$. If $\Psi^{\pi}(X_0,0)>1-\epsilon$ and the condition \eqref{eq:discounted_condition} is satisfied at all times $t\in\{0,1,\cdots,H-1\}$, then the safety objective \eqref{eq:safety_objective} for the system with transition kernel $\mathcal{P}$ and online statistics $P_{\text{online}}$ holds.
\end{theorem}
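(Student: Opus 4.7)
The plan is to lift condition \eqref{eq:discounted_condition} into a one-step submartingale inequality for $M_t := \Psi^\pi(X_t, t)$ evaluated along the realized closed-loop trajectory, and then chain it forward by the tower property to connect the hypothesized initial bound $\Psi^\pi(X_0,0) > 1-\epsilon$ to the target \eqref{eq:safety_objective}. First I would apply Proposition~\ref{prop:hoshino} to rewrite the condition purely in terms of $\Psi^\pi$: because $V^\pi([x^T, H-t]^T) = \Psi^\pi(x, t)$, \eqref{eq:discounted_condition} is equivalent to
\begin{equation*}
\mathbb{E}_{\tilde{\mathcal{P}}_{\text{online}}}\bigl[\Psi^\pi(\hat{X}_{t+1}, t+1) \mid \hat{X}_t = X_t, U_t\bigr] \;\geq\; \Psi^\pi(X_t, t)
\end{equation*}
at each realized $(X_t, U_t)$ on the trajectory.

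Next I would transfer this inequality from the absorbing kernel $\tilde{\mathcal{P}}_{\text{online}}$ to the true online kernel $P_{\text{online}}$ by a case split on $C(X_t)$. On $\{C(X_t)\}$, definition \eqref{eq:x_online_statistics_with_absorption} makes the two kernels coincide, so the bound is immediate. On $\{C(X_t)^c\}$, the right-hand side is $0$ by the definition of $\Psi^\pi$ (whose defining event contains $C(X_t)$), while $\Psi^\pi(\cdot, t+1) \geq 0$ pointwise, so the inequality under $P_{\text{online}}$ holds trivially. Combining the two cases gives
\begin{equation*}
\mathbb{E}_{P_{\text{online}}}\bigl[M_{t+1} \mid X_t, U_t\bigr] \;\geq\; M_t.
\end{equation*}
Taking expectation over $U_t$ drawn under the realized online policy $\hat{\pi}$ (the condition holds pointwise at every action the policy may select, so averaging preserves it) and iterating via the tower property yields $\mathbb{E}^{\hat{\pi}}[M_t \mid X_0] \geq M_0 = \Psi^\pi(X_0,0) > 1-\epsilon$ for every $t \in \{0,1,\ldots,H\}$.

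To close the argument I would reinterpret the left-hand side: by the Markov property of $P_{\text{online}}$ and the law of total probability,
\begin{equation*}
\mathbb{E}^{\hat{\pi}}\bigl[\Psi^\pi(X_t, t) \mid X_0\bigr] \;=\; \mathbb{P}^{\hat{\pi}, \pi}\bigl(C(X_t) \cap \cdots \cap C(X_H) \mid X_0\bigr),
\end{equation*}
which is precisely the quantity in \eqref{eq:safety_objective}. The step I expect to require the most care is the transfer in the case split: the absorbing dynamics freeze an unsafe state, while the true kernel $P_{\text{online}}$ permits arbitrary transitions out of an unsafe state, and one must confirm that this mismatch does not invalidate the submartingale property. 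The saving observation is that $\Psi^\pi$ collapses to $0$ on $\{C(X_t)^c\}$, so at an unsafe state both sides reduce to the trivial bound $\mathbb{E}[M_{t+1} \mid X_t] \geq 0 = M_t$ under \emph{either} kernel, and no information is lost in swapping them.
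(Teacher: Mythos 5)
Your proof is correct and follows essentially the same route as the paper's: condition \eqref{eq:discounted_condition} is read as a one-step submartingale inequality for the safe probability along the closed-loop trajectory and is chained forward by the tower property from the initial bound $\Psi^{\pi}(X_0,0)>1-\epsilon$, with the final identification $\mathbb{E}^{\hat{\pi}}[\Psi^{\pi}(X_t,t)\mid X_0]=\mathbb{P}^{\hat{\pi},\pi}(C(X_t)\cap\cdots\cap C(X_H)\mid X_0)$ matching the paper's opening display. The only substantive difference is that the paper runs the induction entirely inside the auxiliary absorbing MDP (bounding $\mathbb{E}[V^{\pi}(\hat{Y}_t)\mid\hat{Y}_0]$ under $\tilde{\mathcal{P}}_{\text{online}}$ and asserting that this suffices for \eqref{eq:safety_objective}), whereas you work under the true kernel $P_{\text{online}}$ and justify the transfer explicitly via the case split on $C(X_t)$ --- a step the paper leaves implicit, so your version is, if anything, slightly more complete.
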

\begin{proof}
    We have
    \begin{align}
        &\mathbb{P}^{\hat{\pi},\pi}(C(X_t)\cap C(X_{t+1})\cap \cdots \cap C(X_H)|X_0)\nonumber\\
        =&\mathbb{E}_{P_{\hat{\pi}}(X_t|X_0)}[\Psi^{\pi}(X_t,t)|X_0],
    \end{align}
    where $P_{\hat{\pi}}(X_t|X_0)$ is the conditional distribution of $X_t$ given $X_0$ assuming the sequence $X_{0:t}$ has statistics $P_{\text{online}}$ and a policy $\hat{\pi}$ is used for times $\{0,1,\cdots,t-1\}$. From Proposition~\ref{prop:hoshino}, we have that $V^\pi([x^T,k]^T)=\Psi^\pi(x,H-k)$. Therefore, to prove the theorem, it suffices to prove the following statement: if
    \begin{align}
    \label{eq:initial_condition_v}
        V^{\pi}([X_0^T,H]):=V^{\pi}(\hat{Y}_0)>1-\epsilon,
    \end{align}
    and the condition \eqref{eq:discounted_condition} is satisfied at all times $t\in\{0,1,\cdots,H-1\}$, then 
    \begin{align}
    \label{eq:objective_on_v}
        \mathbb{E}_{\tilde{P}_{\hat{\pi}}(\hat{Y}_t|\hat{Y}_0)}[V^{\pi}(\hat{Y}_t)|\hat{Y}_0]\geq 1-\epsilon
    \end{align}
    holds for all $t\in\{0,1,\cdots,H\}$, where $\tilde{P}_{\hat{\pi}}(\hat{Y}_t|\hat{Y}_0)$ is the conditional distribution of $\hat{Y}_t$ given $\hat{Y}_0$ assuming the sequence $\hat{Y}_{0,t}$ has statistics $\tilde{\mathcal{P}}_{\text{online}}$ and a policy $\hat{\pi}$ is used for times $\{0,1,\cdots,t-1\}$. Note that we consider the policy $\hat{\pi}$ to be the online policy here. Since the online policy produces a deterministic control action, we define the online policy as a mapping $\hat{\pi}:\mathcal{Y}\rightarrow\mathcal{U}$ so that we have $\hat{\pi}(\hat{Y}_t)=U_t$ for all $t\in\{0,1,\cdots,H-1\}$. We show \eqref{eq:objective_on_v} holds for all times $t\in\{0,1,\cdots,H\}$ using mathematical induction. We first show that \eqref{eq:objective_on_v} holds at time $0$. We have
    \begin{align}
        V^\pi(\hat{Y}_0)=\mathbb{E}_{\tilde{P}_{\hat{\pi}}(\hat{Y}_0|\hat{Y}_0)}[V^{\pi}(\hat{Y}_0)|\hat{Y}_0]\geq 1-\epsilon
    \end{align}
    holds due to \eqref{eq:initial_condition_v}. We then show that, given \eqref{eq:objective_on_v} holds at time $t$, it also holds at time $t+1$. Taking conditional expectation over the conditional distribution $\tilde{P}_{\hat{\pi}}(\hat{Y}_0|\hat{Y}_0)$ on both side of \eqref{eq:discounted_condition} yields
    \begin{align}
        &\mathbb{E}_{\tilde{P}_{\hat{\pi}}(\hat{Y}_t|\hat{Y}_0)}[\mathbb{E}_{\tilde{\mathcal{P}}_{\text{online}}(\hat{Y}_{t+1}|\hat{Y}_t,U_t)}[V^\pi(\hat{Y}_{t+1})|\hat{Y}_t,U_t]|\hat{Y}_0]\nonumber\\
        \geq& \mathbb{E}_{\tilde{P}_{\hat{\pi}}(\hat{Y}_t|\hat{Y}_0)}[ V^\pi(\hat{Y}_t)|\hat{Y}_0].
    \end{align}
    From the law of total expectation, we have
    \begin{align}
        &\mathbb{E}_{\tilde{P}_{\hat{\pi}}(\hat{Y}_t|\hat{Y}_0)}[\mathbb{E}_{\tilde{\mathcal{P}}_{\text{online}}(\hat{Y}_{t+1}|\hat{Y}_t,U_t)}[V^\pi(\hat{Y}_{t+1})|\hat{Y}_t,U_t]|\hat{Y}_0]\nonumber\\
        =&\mathbb{E}_{\tilde{P}_{\hat{\pi}}(\hat{Y}_t|\hat{Y}_0)}[\mathbb{E}_{\tilde{\mathcal{P}}_{\text{online}}(\hat{Y}_{t+1}|\hat{Y}_t,\hat{\pi}(\hat{Y}_t))}[V^\pi(\hat{Y}_{t+1})|\hat{Y}_t,\hat{\pi}(\hat{Y}_t)]|\hat{Y}_0]\\
        =&\mathbb{E}_{\tilde{P}_{\hat{\pi}}(\hat{Y}_{t+1}|\hat{Y}_0)}[ V^\pi(\hat{Y}_{t+1})|\hat{Y}_0].
    \end{align}
    Therefore, we have
    \begin{align}
        &\mathbb{E}_{\tilde{P}_{\hat{\pi}}(\hat{Y}_{t+1}|\hat{Y}_0)}[ V^\pi(\hat{Y}_{t+1})|\hat{Y}_0]\nonumber\\
        \geq & \mathbb{E}_{\tilde{P}_{\hat{\pi}}(\hat{Y}_t|\hat{Y}_0)}[ V^\pi(\hat{Y}_t)|\hat{Y}_0]\\
        \geq &1-\epsilon,
    \end{align}
    which gives that \eqref{eq:objective_on_v} holds at time $t+1$.
\end{proof}

\subsection{Evaluation of Safety Condition and Persistent Feasibility Guarantee}
\label{sec:persistent_feasibility}
Evaluating \eqref{eq:discounted_condition} can be difficult, since even if the marginalized optimal value function $V^\pi$ is available in closed form, the term $\mathbb{E}_{\tilde{\mathcal{P}}_{\text{online}}(\hat{Y}_{t+1}|\hat{Y}_t,U_t)}[V^\pi(\hat{Y}_{t+1})|\hat{Y}_t,U_t]$ cannot be evaluated since the distribution $\tilde{\mathcal{P}}_{\text{online}}(\hat{Y}_{t+1}|\hat{Y}_t,U_t)$ is unknown. Here, we show a condition that guarantees the satisfaction of \eqref{eq:discounted_condition} and can be evaluated with only the marginalized Q function $Q^\pi$:
\begin{align}
\label{eq:discounted_safety_condition_q}
    &S(X_t,U_t,t):=\nonumber\\
    &Q^\pi(\hat{Y}_t,U_t)- \mathbb{E}_{U\sim\pi(U|\hat{Y}_t)}[Q^\pi(\hat{Y}_t,U)|\hat{Y}_t]\geq 0.
\end{align}
where $\hat{Y}_t=[X_{t}^T,H-t]^T,\forall t\in\{0,1,\cdots,H-1\}$. This formulation allows the Q function obtained from causal reinforcement learning techniques to be used for evaluating the safety condition. We then show that the satisfaction of this safety condition implies the satisfaction of the safety condition \eqref{eq:discounted_condition}.
\begin{lemma}
    Consider the marginalized value function defined in \eqref{eq:v_pi_def} and the marginalized Q function defined in \eqref{eq:q_pi_def}. For all times $t\in\{0,1,\cdots,H-1\}$, if the control action $U_t\in\mathcal{U}$ satisfies \eqref{eq:discounted_safety_condition_q}, then it also satisfies \eqref{eq:discounted_condition}.
\end{lemma}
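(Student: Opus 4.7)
The plan is to reduce the inequality in \eqref{eq:discounted_condition} directly to the inequality in \eqref{eq:discounted_safety_condition_q} via the standard Bellman-type identities that connect $V^\pi$ and $Q^\pi$. The key structural observation is that the reward $r([x^T,k]^T) = \mathbf{1}\{k=0\}\mathbf{1}\{C(x)\}$ vanishes whenever $k>0$. Since the lemma concerns times $t\in\{0,1,\ldots,H-1\}$, for which $K_t = H-t \geq 1$, the stage reward $r(\hat{Y}_t)$ is identically zero at every such time. This will let me collapse the usual Bellman decomposition into a clean one-line equality, rather than a mere inequality.

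First, I would establish the two identities I need. From the definition \eqref{eq:v_pi_def}, conditioning on $U_0$ under $\pi$ immediately gives
\begin{equation*}
    V^\pi(\hat{Y}_t) = \mathbb{E}_{U\sim\pi(U|\hat{Y}_t)}[Q^\pi(\hat{Y}_t,U)\mid\hat{Y}_t].
\end{equation*}
Next, splitting off the $\tau=0$ term in the definition \eqref{eq:q_pi_def} of $Q^\pi$ and invoking the Markov/time-homogeneity of $\tilde{\mathcal{P}}_{\text{online}}$ (the $K$-component decrements deterministically by \eqref{eq:K_dynamics}) yields the Bellman relation
\begin{equation*}
    Q^\pi(\hat{Y}_t,U_t) = r(\hat{Y}_t) + \mathbb{E}_{\tilde{\mathcal{P}}_{\text{online}}(\hat{Y}_{t+1}|\hat{Y}_t,U_t)}[V^\pi(\hat{Y}_{t+1})\mid\hat{Y}_t,U_t].
\end{equation*}
Using $K_t = H-t \geq 1$ for $t\in\{0,1,\ldots,H-1\}$, the definition \eqref{eq:reward_def} gives $r(\hat{Y}_t)=0$, so this simplifies to $Q^\pi(\hat{Y}_t,U_t) = \mathbb{E}_{\tilde{\mathcal{P}}_{\text{online}}(\hat{Y}_{t+1}|\hat{Y}_t,U_t)}[V^\pi(\hat{Y}_{t+1})\mid\hat{Y}_t,U_t]$.

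Substituting both identities into the left-hand side of \eqref{eq:discounted_condition}, I get
\begin{equation*}
    \mathbb{E}_{\tilde{\mathcal{P}}_{\text{online}}(\hat{Y}_{t+1}|\hat{Y}_t,U_t)}[V^\pi(\hat{Y}_{t+1})\mid\hat{Y}_t,U_t] - V^\pi(\hat{Y}_t)
    = Q^\pi(\hat{Y}_t,U_t) - \mathbb{E}_{U\sim\pi(U|\hat{Y}_t)}[Q^\pi(\hat{Y}_t,U)\mid\hat{Y}_t],
\end{equation*}
which is exactly the left-hand side of \eqref{eq:discounted_safety_condition_q}. Hence the two conditions are in fact equivalent at every time $t\in\{0,1,\ldots,H-1\}$, and the lemma follows immediately.

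The only real subtlety, and what I would want to write out carefully, is the time-shift argument that justifies the Bellman identity for $Q^\pi$: namely, that when one starts from $\hat{Y}_1=[x'^T,k-1]^T$, the conditional expectation $\mathbb{E}[\sum_{\tau=1}^{k} r(\hat{Y}_\tau)\mid \hat{Y}_1]$ coincides with $V^\pi([x'^T,k-1]^T)$ by the Markov property and stationarity of $\tilde{\mathcal{P}}_{\text{online}}$ (including the absorbing branch in \eqref{eq:x_online_statistics_with_absorption}, which is itself time-homogeneous). Once this is stated, the rest is algebraic manipulation and no approximation or inequality slack is introduced anywhere.
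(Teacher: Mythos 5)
Your proof is correct and follows essentially the same route as the paper's: both use the Bellman identity $Q^\pi(\hat{Y}_t,U_t)=r(\hat{Y}_t)+\mathbb{E}[V^\pi(\hat{Y}_{t+1})\mid\hat{Y}_t,U_t]$ with $r(\hat{Y}_t)=0$ for $t<H$, together with $V^\pi(\hat{Y}_t)=\mathbb{E}_{U\sim\pi(U|\hat{Y}_t)}[Q^\pi(\hat{Y}_t,U)\mid\hat{Y}_t]$, to show the two left-hand sides coincide exactly. Your added observation that the two conditions are in fact equivalent (not merely one implying the other) is accurate and slightly sharper than the lemma's statement, but the argument is the same.
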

\begin{proof}
    For all times $t\in\{0,1,\cdots,H-1\}$, the modified Bellman equation gives
    \begin{align}
        Q^\pi(\hat{Y}_t,U_t)=&r(\hat{Y}_t)+\mathbb{E}[V^\pi(\hat{Y}_{t+1})|\hat{Y}_t,U_t]\\
        =&\mathbb{E}[V^\pi(\hat{Y}_{t+1})|\hat{Y}_t,U_t]\label{eq:modified_bellman_2}
    \end{align}
    since $r(\hat{Y}_t)=0$ for all times $t\neq H$. By definitions \eqref{eq:v_pi_def} and \eqref{eq:q_pi_def}, we have
    \begin{align}
        \label{eq:lm2_step_2}
        V^\pi(\hat{Y}_t)=\mathbb{E}_{u\sim\pi(U|\hat{Y}_t)}[Q^\pi(\hat{Y}_t,U)|\hat{Y}_t].
    \end{align}
    Combining \eqref{eq:modified_bellman_2} and \eqref{eq:lm2_step_2}, we have
    \begin{align}
        S(X_t,U_t,t)=&Q^\pi(\hat{Y}_t,U_t)\nonumber\\
        &- \mathbb{E}_{U\sim\pi(U|\hat{Y}_t)}[Q^\pi(\hat{Y}_t,U)|\hat{Y}_t]\\
        = & \mathbb{E}[V^\pi(\hat{Y}_{t+1})|\hat{Y}_t,U_t]- V^\pi(\hat{Y}_t).
    \end{align}
\end{proof}

To ensure the safety objective, there must always exist feasible control action that doesn't violate the safety condition \eqref{eq:discounted_safety_condition_q}. Here, we present a provable guarantee for such persistent feasibility.
\begin{theorem}
    For all times $t\in\{0,1,\cdots,H-1\}$, there always exists $U_t\in\mathcal{U}$ such that \eqref{eq:discounted_safety_condition_q} holds.
\end{theorem}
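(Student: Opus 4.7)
The plan is to reduce the statement to the elementary fact that the supremum of a function over the support of a probability measure is at least its expectation under that measure. Concretely, rewriting \eqref{eq:discounted_safety_condition_q} as
\begin{align}
    Q^\pi(\hat{Y}_t,U_t) \geq \mathbb{E}_{U\sim\pi(U|\hat{Y}_t)}[Q^\pi(\hat{Y}_t,U)|\hat{Y}_t],
\end{align}
the right-hand side is a fixed scalar (once $\hat{Y}_t$ is given), and the left-hand side is simply $Q^\pi(\hat{Y}_t,\cdot)$ evaluated at a choice of action. So the feasibility question becomes: does there exist $u\in\mathcal{U}$ at which $Q^\pi(\hat{Y}_t,u)$ attains a value at least its mean under $\pi(\cdot|\hat{Y}_t)$?

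First I would fix $t\in\{0,1,\dots,H-1\}$ and $\hat{Y}_t$, abbreviate $\bar{Q}:=\mathbb{E}_{U\sim\pi(U|\hat{Y}_t)}[Q^\pi(\hat{Y}_t,U)|\hat{Y}_t]$, and note that the distribution $\pi(\cdot|\hat{Y}_t)$ is a probability measure supported on a subset of $\mathcal{U}$. Next I would invoke the standard fact that for any probability measure $\mu$ on a measurable space and any bounded measurable function $f$, the essential supremum of $f$ with respect to $\mu$ satisfies $\operatorname{ess\,sup}_\mu f \geq \mathbb{E}_\mu[f]$; equivalently, the set $\{u:f(u)\geq \mathbb{E}_\mu[f]\}$ has positive $\mu$-measure (otherwise $f<\mathbb{E}_\mu[f]$ almost surely, contradicting the definition of the mean). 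Applied to $f=Q^\pi(\hat{Y}_t,\cdot)$ and $\mu=\pi(\cdot|\hat{Y}_t)$, this yields at least one $U_t\in\operatorname{supp}(\pi(\cdot|\hat{Y}_t))\subseteq\mathcal{U}$ with $Q^\pi(\hat{Y}_t,U_t)\geq\bar{Q}$, which is exactly \eqref{eq:discounted_safety_condition_q}.

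There is essentially no obstacle here: the argument does not require compactness of $\mathcal{U}$, continuity of $Q^\pi$, or existence of a maximizer, only that $\pi(\cdot|\hat{Y}_t)$ is a well-defined probability distribution on $\mathcal{U}$ (guaranteed by the problem formulation, where $\pi$ maps into $\Delta(\mathcal{U})$) and that $Q^\pi$ is bounded (which follows from $r\in\{0,1\}$ and the definition in \eqref{eq:q_pi_def}, giving $Q^\pi\in[0,1]$). The only thing worth flagging is that the guaranteed $U_t$ need not be unique and need not coincide with any nominal action; what matters for persistent feasibility is merely that the feasible set $\{u\in\mathcal{U}:S(X_t,u,t)\geq 0\}$ is nonempty at every $(X_t,t)$, which the above argument supplies uniformly in $t$.
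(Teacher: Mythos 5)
Your proof is correct, and it rests on the same one\hyphenchar\font=-1{}-line inequality as the paper's: some action must achieve at least the $\pi(\cdot|\hat{Y}_t)$-average of $Q^\pi(\hat{Y}_t,\cdot)$, which is exactly $S(X_t,U_t,t)\geq 0$. The difference is in how that action is produced. The paper takes $u^*=\arg\max_{u\in\mathcal{U}}Q^\pi(\hat{Y}_t,u)$ and shows $Q^\pi(\hat{Y}_t,u^*)\geq\int Q^\pi(\hat{Y}_t,u)\,\pi(U=u|\hat{Y}_t)\,du$ by pulling the constant $Q^\pi(\hat{Y}_t,u^*)$ inside the integral; this tacitly assumes the maximum is attained, which holds for finite or compact $\mathcal{U}$ with suitable regularity of $Q^\pi$ (and certainly in the paper's discrete-action simulation) but is not guaranteed in general. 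You instead argue that the set $\{u:Q^\pi(\hat{Y}_t,u)\geq\bar{Q}\}$ must have positive $\pi(\cdot|\hat{Y}_t)$-measure, since otherwise $Q^\pi(\hat{Y}_t,\cdot)<\bar{Q}$ almost surely would force $\bar{Q}<\bar{Q}$; this needs only integrability of $Q^\pi(\hat{Y}_t,\cdot)$ (automatic from $Q^\pi\in[0,1]$) and no maximizer, so it is strictly more general. One side remark: your feasible action is guaranteed to lie in the support of $\pi(\cdot|\hat{Y}_t)$, whereas the paper's $u^*$ need not; neither fact is needed for the theorem, but the paper's choice has the practical appeal of exhibiting a canonical feasible action (the greedy one) that the controller in \eqref{eq:get_control_from_optimization} could always fall back on.
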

\begin{proof}
    Consider
    \begin{align}
        u^*=\arg\max_{u\in\mathcal{U}}Q^\pi(\hat{Y}_t,u),
    \end{align}
    we have
    \begin{align}
    \label{eq:lm3_4_1}
        Q^\pi(\hat{Y}_t,u^*)\geq Q^\pi(\hat{Y}_t,u),\forall u\in\mathcal{U}.
    \end{align}
    We also have
    \begin{align}
    \label{eq:lm3_4_2}
        \int_{u\in\mathcal{U}}\pi(U=u|\hat{Y}_t=y)du=1,\forall y\in\mathcal{Y}.
    \end{align}
    Due to \eqref{eq:lm3_4_1} and \eqref{eq:lm3_4_2}, we have
    \begin{align}
        Q^\pi(\hat{Y}_t,u^*)=&Q^\pi(\hat{Y}_t,u^*)\int_{u\in\mathcal{U}}\pi(U=u|\hat{Y}_t)du\\
        =&\int_{u\in\mathcal{U}}Q^\pi(\hat{Y}_t,u^*)\pi(U=u|\hat{Y}_t)du\\
        \geq & \int_{u\in\mathcal{U}}Q^\pi(\hat{Y}_t,u)\pi(U=u|\hat{Y}_t)du\\
        =&\mathbb{E}_{U\sim\pi(U|\hat{Y}_t)}[Q^\pi(\hat{Y}_t,U)|\hat{Y}_t].
    \end{align}
    As $U_t=u^*\in\mathcal{U}$ satisfies \eqref{eq:discounted_safety_condition_q}, there exists a control action in $\mathcal{U}$ that satisfies \eqref{eq:discounted_safety_condition_q}.
\end{proof}
\subsection{Proposed Algorithm}
\label{sec:proposed_algorithm}
Before introducing the proposed algorithm, we first show that, even if the MDP $(\mathcal{Y},\mathcal{U},\tilde{\mathcal{P}}_{\text{offline}},r,H)$ is an auxiliary process and does not have the corresponding physical system, we can obtain dataset $\tilde{\mathcal{D}}=\{\tilde{\mathcal{D}}_1,\tilde{\mathcal{D}}_2,\cdots,\tilde{\mathcal{D}}_{N_D}\}$ using the available dataset $\mathcal{D}$. Here, each individual data $\tilde{\mathcal{D}}_i$ contains the sequence of state $\hat{Y}^i_{0:H}$ and control action $U^i_{0:H}$ in an episode, and the sequences follows the statistics $\tilde{P}_{\text{offline}}$. We propose Algorithm~\ref{alg:data_conversion} that generates $\tilde{\mathcal{D}}$ using $\mathcal{D}$. Using data $\tilde{\mathcal{D}}$, one can estimate the value function and Q-function defined in \eqref{eq:v_pi_def} and \eqref{eq:q_pi_def} using existing causal reinforcement learning methods. Here, we introduce an example method that uses the mediator variable to learn unbiased Q-function~\cite{shi2024off}. This method utilizes the front-door adjustment~\cite{pearl2009causality} to counter the confounding effect. Note that the model and assumption introduced in this subsection are \textit{specific to the application of the corresponding method only}. The proposed method works with any causal safe control methods whose model satisfies Assumption~\ref{asp:markovian}.

\begin{algorithm}
\caption{Generation of $\tilde{\mathcal{D}}$ using $\mathcal{D}$}\label{alg:data_conversion}
\begin{algorithmic}[1]
\STATE {\bfseries Input:} offline dataset $\mathcal{D}$
\STATE $\tilde{\mathcal{D}}\gets\emptyset$
\FOR{$i$ in $\{1,2,\cdots,N_D\}$}
    \STATE $\{X^i_{0:H},U^i_{0:H}\}\gets\mathcal{D}_i$
    \STATE $\hat{X}_0\gets X^i_0$
    \STATE $\hat{Y}^i_0\gets [\hat{X}_0^T,H]^T$
    \FOR{$t$ in $\{0,1,\cdots,H-1\}$}
        \IF{$C(\hat{X}_t)$ occurs}
            \STATE $\hat{X}_{t+1}\gets X^i_{t+1}$
        \ELSE
            \STATE $\hat{X}_{t+1}\gets \hat{X}_{t}$
        \ENDIF
        \STATE $\hat{Y}^i_{t+1}\gets [\hat{X}_t^T,H-t-1]^T$
    \ENDFOR
    \STATE $\tilde{\mathcal{D}}_i\gets\{\hat{Y}^i_{0:H},U^i_{0:H}\}$
    \STATE $\tilde{\mathcal{D}}\gets\tilde{\mathcal{D}}\cup\{\tilde{\mathcal{D}}_i\}$
\ENDFOR
\STATE {\bfseries Return} $\tilde{\mathcal{D}}$

\end{algorithmic}
\end{algorithm}

We define an observable mediator variable $M_t\in\mathcal{M}$\footnote{Since $M_t$ is observable, in this example, the sequences $M^i_{0:H},i\in\{1,2,\cdots,N_D\}$ is also in the offline dataset $\mathcal{D}$.}, consider the spaces $\mathcal{X}$, $\mathcal{U}$, $\mathcal{W}$ and $\mathcal{M}$ to be discrete, and make the following assumption.
\begin{assumption}\label{asp:m}
The mediator $M_t$ intercepts every directed path from $U_t$ to $U_t$ or to $S_{t+1}$. The observable state $X_t$ blocks all back-door paths from $U_t$ to $M_t$. All back-door paths from $M_t$ to $X_{t+1}$ are blocked by $(X_t,U_t)$.
\end{assumption}
Here, the definitions for directed path and back-door path follow the definitions in~\citealt{pearl2009causality}, Chapter 3.3.2. We also define the Q function conditioned on the mediator as
\begin{align}
    &Q_M^\pi([x^T,k]^T,u,m):=\nonumber\\
    &\mathbb{E}_{\tilde{\mathcal{P}}_{\text{online}}}[\sum_{\tau=0}^{k} r(\hat{Y}_\tau)|\hat{Y}_0=[x^T,k]^T,U_0=u,M_0=m,\pi]\\
    &\mathbb{E}_{\tilde{\mathcal{P}}_{\text{online}}}[\sum_{\tau=0}^{\infty} r(\hat{Y}_\tau)|\hat{Y}_0=[x^T,k]^T,U_0=u,M_0=m,\pi].
\end{align}
The Bellman equation is given by
\begin{align}
    &Q_M^\pi(\hat{Y}_t,U_t,M_t)\nonumber\\
    =&r(\hat{Y}_t)+\mathbb{E}_{\tilde{\mathcal{P}}_{\text{online}}(\hat{Y}_{t+1}|\hat{Y}_t,U_t,M_t)}[V^\pi(\hat{Y}_{t+1})|\hat{Y}_t,U_t,M_t]\\
    =&r(\hat{Y}_t)+\mathbb{E}_{\tilde{\mathcal{P}}_{\text{offline}}(\hat{Y}_{t+1}|\hat{Y}_t,U_t,M_t)}[V^\pi(\hat{Y}_{t+1})|\hat{Y}_t,U_t,M_t],\label{eq:bellman_qm}
\end{align}
where \eqref{eq:bellman_qm} holds because $\tilde{\mathcal{P}}_{\text{online}}(\hat{Y}_{t+1}|\hat{Y}_t,U_t,M_t)=\tilde{\mathcal{P}}_{\text{offline}}(\hat{Y}_{t+1}|\hat{Y}_t,U_t,M_t)$ ($\tilde{\mathcal{P}}_{\text{online}}(\hat{Y}_{t+1}|\hat{Y}_t,U_t,M_t)$ can be consistently estimated from offline data~\cite{pearl2009causality}). Due to \eqref{eq:bellman_qm}, we can estimate $Q^\pi_M$ iteratively with data $\tilde{\mathcal{D}}$ by solving 
\begin{align}
    \arg\min_{Q\in\mathcal{Q}}\sum_{i=1}^{N_D}\sum_{t=0}^{H-1}\left(r(\hat{Y}^i_t)-Q(\hat{Y}^i_t,U^i_t,M^i_t)+ \hat{V}^l(\hat{Y}^i_{t+1})\right)^2\label{eq:learn_q}
\end{align}
at iteration $l+1$, where $\mathcal{Q}$ is the class of functions of the form $\mathcal{Y}\times\mathcal{U}\times\mathcal{M}\rightarrow[0,1]$, and $\hat{V}^l$ is the estimation for the value function $V^\pi$ in iteration $l$. To evaluate \eqref{eq:learn_q}, one needs to evaluate $V^\pi$ using $Q^{\pi}_M$ and known offline statistics. The value function can be written as
\begin{align}
    &V^\pi(y)\nonumber\\
    =&\sum_{u\in\mathcal{U}}Q^\pi(y,u)\pi(U_t=u|\hat{Y}_t=y)\\
    =&\sum_{u\in\mathcal{U}}\mathbb{E}_{\tilde{\mathcal{P}}_{\text{online}}(\hat{Y}_{t+1}|\hat{Y}_t,U_t)}[V^\pi(\hat{Y}_{t+1})\nonumber\\
    &+r(y)|\hat{Y}_t=y,U_t=u]\pi(U_t=u|\hat{Y}_t=y)\label{eq:bellman_substitution}\\
    =&\sum_{u\in\mathcal{U}}\sum_{y'\in\mathcal{Y}}(V^\pi(y')+r(y))\nonumber\\
    &\tilde{P}_{\text{online}}(\hat{Y}_{t+1}=y'|\hat{Y}_t=y,U_t=u)\pi(U_t=u|\hat{Y}_t=y),\label{eq:value_decomposition_1}
\end{align}
where \eqref{eq:bellman_substitution} is due to the Bellman equation. From the front-door adjustment formula in~\citealt{pearl2009causality}, Chapter 3.3.2, conditioned on $\hat{Y}_t$, we have
\begin{align}
    &\tilde{P}_{\text{online}}(\hat{Y}_{t+1}=y'|U_t=u,\hat{Y}_t=y)=\nonumber\\
    &\sum_{m\in\mathcal{M}}\sum_{u'\in\mathcal{U}}\tilde{P}_{\text{offline}}(\hat{Y}_{t+1}=y'|U_t=u',M_t=m,\hat{Y}_t=y)\nonumber\\
    &\tilde{P}_{\text{offline}}(U_t=u'|\hat{Y}_t=y)\tilde{P}_{\text{offline}}(M_t=m|U_t=u,\hat{Y}_t=y)
\end{align}
given Assumption~\ref{asp:m}. Substituting into \eqref{eq:value_decomposition_1}, we have
\begin{align}
    &V^\pi(y)\nonumber\\
    =&\sum_{u\in\mathcal{U}}\sum_{y'\in\mathcal{Y}}(V^\pi(y')+r(y))\nonumber\\
    &\sum_{m\in\mathcal{M}}\sum_{u'\in\mathcal{U}}\tilde{P}_{\text{offline}}(\hat{Y}_{t+1}=y'|U_t=u',M_t=m,\hat{Y}_t=y)\nonumber\\
    &\tilde{P}_{\text{offline}}(U_t=u'|\hat{Y}_t=y)\tilde{P}_{\text{offline}}(M_t=m|U_t=u,\hat{Y}_t=y)\nonumber\\
    &\pi(U_t=u|\hat{Y}_t=y)\\
    =&\sum_{m\in\mathcal{M}}\sum_{u'\in\mathcal{U}}\sum_{u\in\mathcal{U}}\sum_{y'\in\mathcal{Y}}(V^\pi(y')+r(y))\nonumber\\
    &\tilde{P}_{\text{offline}}(\hat{Y}_{t+1}=y'|U_t=u',M_t=m,\hat{Y}_t=y)\nonumber\\
    &\tilde{P}_{\text{offline}}(U_t=u'|\hat{Y}_t=y)\tilde{P}_{\text{offline}}(M_t=m|U_t=u,\hat{Y}_t=y)\nonumber\\
    &\pi(U_t=u|\hat{Y}_t=y).\label{eq:value_decomposition_2}
\end{align}
Similar to \eqref{eq:value_decomposition_1}, from Bellman equation \eqref{eq:bellman_qm}, we can write $Q_M^\pi$ as 
\begin{align}
    &Q_M^\pi(y,u,m)=\sum_{y'\in\mathcal{Y}}(V^\pi(y')+r(y))\nonumber\\
    &\tilde{P}_{\text{offline}}(\hat{Y}_{t+1}=y'|U_t=u,M_t=m,\hat{Y}_t=y).
\end{align}
Substituting into \eqref{eq:value_decomposition_2}, we have
\begin{align}
     &V^\pi(y)\nonumber\\
     =&\sum_{m\in\mathcal{M}}\sum_{u'\in\mathcal{U}}\sum_{u\in\mathcal{U}}Q_M^\pi(y,u',m)\tilde{P}_{\text{offline}}(U_t=u'|\hat{Y}_t=y)\nonumber\\
     &\tilde{P}_{\text{offline}}(M_t=m|U_t=u,\hat{Y}_t=y)\pi(U_t=u|\hat{Y}_t=y),
\end{align}
whose RHS only includes $Q_M^\pi$ and distributions in the offline statistics. It is also easy to obtain $Q^\pi$ using $Q_M^\pi$. We have
\begin{align}
    &Q^\pi(y,u)\nonumber\\
    =&\mathbb{E}_{\tilde{\mathcal{P}}_{\text{online}}(M_t|U_t,\hat{Y}_t)}[Q_M^\pi(\hat{Y}_t,U_t,M_t)|\hat{Y}_t=y,U_t=u]\\
    =&\mathbb{E}_{\tilde{\mathcal{P}}_{\text{offline}}(M_t|U_t,\hat{Y}_t)}[Q_M^\pi(\hat{Y}_t,U_t,M_t)|\hat{Y}_t=y,U_t=u]\label{eq:get_q_pi_from_gamma}
\end{align}
because $\tilde{\mathcal{P}}_{\text{online}}(M_t|U_t,\hat{Y}_t)=\tilde{\mathcal{P}}_{\text{offline}}(M_t|U_t,\hat{Y}_t)$ ($\tilde{\mathcal{P}}_{\text{online}}(M_t|U_t,\hat{Y}_t)$ can be consistently estimated from offline data~\cite{pearl2009causality}). We introduce the proposed algorithm in Algorithm~\ref{alg: safe}. To ensure the safety objective while preserving as much performance as possible, we use the following optimization problem to obtain the safe control action:
\begin{align}
\label{eq:get_control_from_optimization}
    \arg\min_{u\in\mathcal{U}}&\ J(U^n,u)\\
    \text{s.t.}&\ S(X_t,u,t)\geq 0,\nonumber
\end{align}
where $U^n$ is the control action obtained from the policy $\pi^n$ and $J:\mathcal{U}\times\mathcal{U}\rightarrow\mathbb{R}$ is a function that penalizes deviation from $U^n$.
\begin{algorithm}
\caption{Proposed control algorithm}\label{alg: safe}
\begin{algorithmic}[1]
\STATE {\bfseries Require:} offline dataset $\mathcal{D}$
\STATE Obtain dataset $\tilde{\mathcal{D}}$ with dataset $\mathcal{D}$ and Algorithm~\ref{alg:data_conversion}
\STATE $l\gets 0$
\STATE Initialize $\hat{Q}^{\pi,0}_M\in\mathcal{Q}$
\WHILE{not converged}
    \STATE $\hat{Q}^{\pi,l+1}_M\gets \eqref{eq:learn_q}$
    \STATE $l\gets l+1$
\ENDWHILE
\STATE $Q^\pi_M\gets \hat{Q}^{\pi,l}_M$
\STATE $t\gets 0$
\WHILE{$t<H$}
    \STATE observe state $X_t$
    \STATE $U^n\sim\pi^n(\cdot|X_t)$
    \STATE Estimate $Q^\pi$ using \eqref{eq:get_q_pi_from_gamma} with $Q^\pi_M$
    \STATE $U_t\gets\eqref{eq:get_control_from_optimization}$
    \STATE execute control action $U_t$
    \STATE $t\gets t+1$
\ENDWHILE

\end{algorithmic}
\end{algorithm}

\section{Numeric Simulation}
We consider a setting that resembles a simplified driving scenario with discrete state space. Let $X_t=[X^1_t,X^2_t]^T\in\mathbb{Z}^2$ be the state of the system, where $X^1_t$ represents the position of the vehicle on a 1-dimensional road, and $X^2_t$ represents the velocity of the vehicle. The control action $U_t\in\{-3,-2,-1,0,1\}$ represents the acceleration or deceleration applied to the wheels. The latent variable $W_t\in\{0,1,2,3\}$ represents the slipperiness of the road, which can make the acceleration or deceleration applied to the wheels less effective. The system also has uncertainty $N_t=[N^1_t,N^2_t]^T\in\{-1,0,1\}\times\{-2,-1,0,1,2\}$. The system transition is given by
\begin{align}
    X^1_{t+1}=&X^1_t+X^2_t\\
    X^2_{t+1}=&\max(0,X^2_t+\nonumber\\
    &\text{sign}(U_t+N^1_t)\max(0,|U_t+N^1_t|-W_t)+N^2_t).
\end{align}
The distributions of $W_t$ and $N_t$ are given in Appendix~\ref{sec:simulation_parameters}. The safety requirement is that the vehicle obeys a varying speed limit. Specifically,
\begin{align}
    C(X_t)=&\{\mathbf{1}\{X^1_t\mod 10<4\}\cap\{X^2_t\leq 3\}\}\nonumber\\
    &\cup\{\mathbf{1}\{X^1_t\mod 10\geq 4\}\cap\{X^2_t\leq 5\}\}.
\end{align}
The offline dataset can be considered as human driving dataset where the human observes the slipperiness of the road in their behavioral policy, but the slipperiness is not recorded by the sensor. Specifically, we consider a behavioral policy $\pi^b$ that applies heavier brakes when the road is more slippery. The detailed distribution for the behavioral policy is given in Appendix~\ref{sec:simulation_parameters}. The nominal policy simply chooses actions in the action space with identical probability, \ie $\pi(U_t|X_t)=0.2,\forall U_t\in\{-3,-2,-1,0,1\},X_t\in\mathbb{Z}^2$. We consider $H=10$ and $\epsilon = 0.2$. We run $100$ simulations, where each simulation simulates $100$ trajectories starting from $X_0=[0,0]^T$, with the following 2 methods:
\begin{enumerate}
    \item The proposed method. The proposed method has access to an unbiased estimate for the Q-function $Q^\pi$, which can be estimated using causal reinforcement learning method such as \citealt{wang2021provably} and \citealt{shi2024off}.
    \item The discrete-time control barrier function (DTCBF) proposed in \citealt{DBLP:conf/rss/CosnerCTA23}. This method cannot utilize the Q-function, so the safety condition is evaluated using the distribution obtained from the offline dataset. The detailed parameters are given in Appendix~\ref{sec:simulation_parameters}.
\end{enumerate}
For both methods, the control policy is to maximize the control action while ensuring the corresponding safety condition. 

The simulation results are illustrated in Figure~\ref{fig:short_term_safety} and~\ref{fig:long_term_safety}. The results show that the proposed controller, although having no access to the latent variable $W$ or the ground truth transition dynamics, can achieve a safety performance that satisfies the safety objective \eqref{eq:safety_objective} with the Q-function. On the other hand, the discrete-time control barrier function cannot achieve the safety objective with the offline statistics even if the control action satisfies the safety condition.

\begin{figure}[t]
\centering
\includegraphics[height=2in]{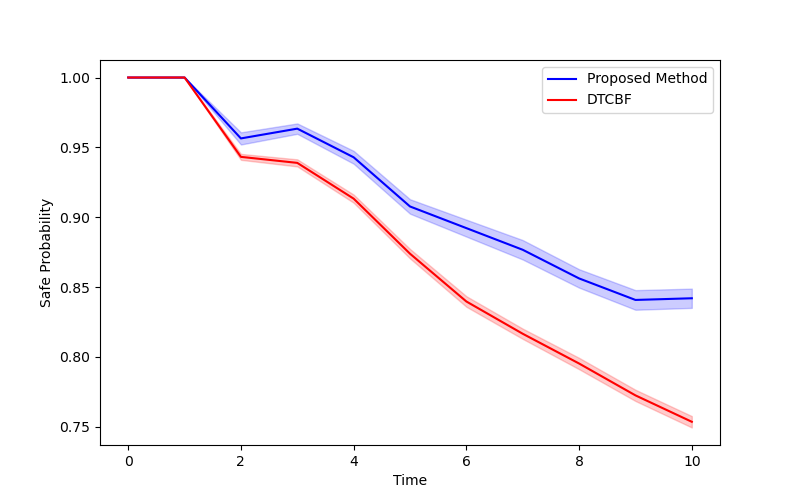}
\caption{Probability of safety at each time for both controllers with 95\% confidence interval shown in the shady region.}
\label{fig:short_term_safety}
\end{figure}
\begin{figure}[t]
\centering
\includegraphics[height=2in]{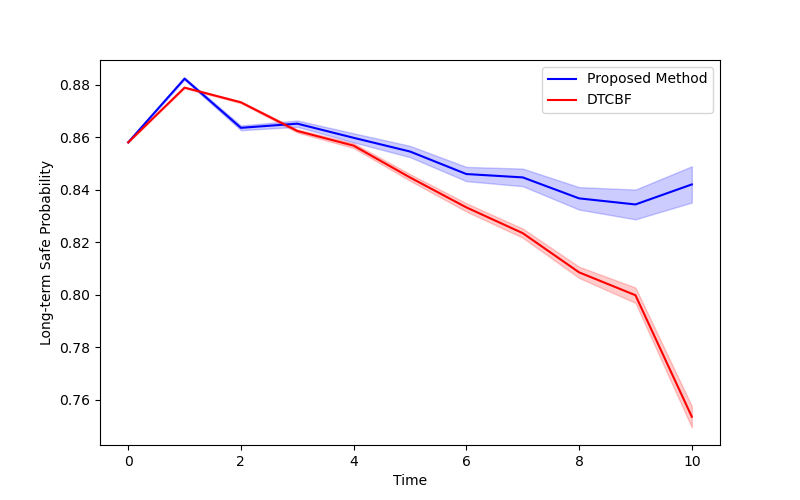}
\caption{Long-term safety at each time for both controllers with 95\% confidence interval shown in the shady region. The long-term safety is equal to $\mathbb{P}^{\hat{\pi},\pi}(C(X_t)\cap C(X_{t+1})\cap \cdots \cap C(X_H)|X_0)$ defined in \eqref{eq:safety_objective}.}
\label{fig:long_term_safety}
\end{figure}

\section*{Acknowledgment}
This work is sponsored in part by Carnegie Mellon University Security and Privacy Institute (CyLab), in part by the PRESTO Grant Number JPMJPR2136 from Japan Science and Technology agency, and in part by the Department of the Navy, Office of Naval Research, under award number N00014-23-1-2252. The views expressed are those of the authors and do not reflect the official policy or position of the US Navy, Department of Defense or the US Government.
\section*{Impact Statement}
This paper presents work whose goal is to advance the field of Machine Learning. There are many potential societal consequences of our work, none which we feel must be specifically highlighted here.

\bibliography{ref}
\bibliographystyle{icml2025}
\appendix
\onecolumn
\section{Mismatch between Online Statistics and Offline Statistics: An Example}
\label{sec:mismatch_system_example}
Consider a system with observable state $X_t\in\{0,1\}$ and latent variable $W_t\in\{0,1\}$. The control action is $U_t\in\{0,1\}$. The system is considered to be safe when $X_t=0$. Suppose that the state transition probabilities are given by
\begin{align}
    \mathbb{P}(X_{t+1}=0|X_t=0,W_t=0,U_t=0)&=0.9\\
    \mathbb{P}(X_{t+1}=0|X_t=0,W_t=1,U_t=0)&=1\\
    \mathbb{P}(X_{t+1}=0|X_t=1,W_t=1,U_t=0)&=0\\
    \mathbb{P}(X_{t+1}=0|X_t=0,W_t=0,U_t=1)&=1\\
    \mathbb{P}(X_{t+1}=0|X_t=0,W_t=1,U_t=1)&=0.1\\
    \mathbb{P}(X_{t+1}=0|X_t=1,W_t=1,U_t=1)&=0.
\end{align}
The value for $W_t$ is determined by
\begin{align}
    \mathbb{P}(W_t=0|X_t=0)&=0.5\\
    \mathbb{P}(W_t=0|X_t=1)&=0.
\end{align}
Note that $\mathbb{P}(X_{t+1}=1|\cdot)=1-\mathbb{P}(X_{t+1}=0|\cdot)$, and we omit the case for $X_t=1,W_t=0$ because it is not reachable from other states. The behavioral policy $\pi^b$ is given by
\begin{align}
    \pi^b(U_t=0|X_t=0,W_t=0)&=0.5\\
    \pi^b(U_t=0|X_t=0,W_t=1)&=1.
\end{align}
We find out the safe probability of the immediate next time step given $U_t=1$ for both the offline statistics and the online statistics. We have
\begin{align}
    \mathbb{P}_{\text{offline}}(X_{t+1}=0|X_t=0,U_t=1)=&\frac{\mathbb{E}_{W_t\sim\mathbb{P}(W_t|X_t)}[\mathbb{P}(X_{t+1}=0|X_t=0,W_t,U_t=1)\pi^b(U_t=1|X_t=0,W_t)]}{\mathbb{E}_{W_t\sim\mathbb{P}(W_t|X_t)}[\mathbb{P}(U_t=1|X_t=0,W_t)]}\\
    =&1
\end{align}
and
\begin{align}
    \mathbb{P}_{\text{online}}(X_{t+1}=0|X_t=0,U_t=1)=&\mathbb{E}_{W_t\sim\mathbb{P}(W_t|X_t)}[\mathbb{P}(X_{t+1}=0|X_t=0,W_t,U_t=1)]\\
    =&0.55.
\end{align}
We observe that the safe probability evaluated from the online statistics is significantly lower than the safe probability evaluated from the offline statistics. This shows that if a controller uses the offline statistics to perform safe control, the safe probabilities associated with some control actions will be significantly over-approximated. 
\section{Proof of Proposition~\ref{prop:hoshino}}
\label{sec:proof_hoshino}
\begin{proof}
    Consider a function $\tilde{\Psi}^\pi:\mathcal{X}\times\mathbb{Z}\rightarrow[0,1]$ defined as
    \begin{align}
        &\tilde{\Psi}^\pi(x,t):=\tilde{\mathbb{P}}^\pi(C(\hat{X}_t)\cap C(\hat{X}_{t+1})\cap \cdots \cap C(\hat{X}_H)|\hat{X}_t=x),
    \end{align}
    where the probability is evaluated with the assumption that the sequence $\hat{X}_{t:H}$ has the statistics \eqref{eq:x_online_statistics_with_absorption}. We first show that
    \begin{align}
        \tilde{\Psi}^\pi(x,t)=\Psi^\pi(x,t),\forall x\in\mathcal{X},t\in\{0,1,\cdots,H\}. \label{eq:psi_equivalence}
    \end{align}
    We have
    \begin{align}
        &\Psi^\pi(x,t)=\int_{\mathcal{X}^{H-t+1}}\mathbf{1}\{C(X_t)\cap C(X_{t+1})\cap \cdots \cap C(X_H)\}P^\pi(X_{t:H}|X_t=x)dX_{t:H},
    \end{align}
    where $P^\pi(X_{t:H}|X_t=x)$ is the conditional distribution of the sequence $X_{t:H}$ given $X_t=x$ when the sequence has statistics \eqref{eq:online_statistics} and a policy $\pi$ is used. Similarly, we have
    \begin{align}
        &\tilde{\Psi}^\pi(x,t)=\int_{\mathcal{X}^{H-t+1}}\mathbf{1}\{C(\hat{X}_t)\cap C(\hat{X}_{t+1})\cap \cdots \cap C(\hat{X}_H)\}\tilde{P}^\pi_x(\hat{X}_{t:H}|\hat{X}_t=x)d\hat{X}_{t:H},\label{eq:tilde_psi_expand}
    \end{align}
    where $\tilde{P}^\pi_x(\hat{X}_{t:H}|\hat{X}_t=x)$ is the conditional distribution of the sequence $\hat{X}_{t:H}$ given $\hat{X}_t=x$ when the sequence has statistics \eqref{eq:x_online_statistics_with_absorption} and a policy $\pi$ is used. Note that, when the sequences $X_{t:H}$ and $\hat{X}_{t:H}$ take the same value, $P^\pi(X_{t:H}|X_t=x)\neq\tilde{P}^\pi_x(\hat{X}_{t:H}|\hat{X}_t=x)$ only if there exists a time $\tau\in\{t,t+1,\cdots,H\}$ such that $C(X_\tau)$ (and $C(\hat{X}_\tau)$ since $X_\tau=\hat{X}_\tau$) does not occur. In such case, we have $\mathbf{1}\{C(X_t)\cap C(X_{t+1})\cap \cdots \cap C(X_H)\}=\mathbf{1}\{C(\hat{X}_t)\cap C(\hat{X}_{t+1})\cap \cdots \cap C(\hat{X}_H)\}=0$. Therefore, we have \eqref{eq:psi_equivalence}. Next, we show that 
    \begin{align}
        \tilde{\Psi}^\pi(x,H-k)=V^\pi([x^T,k]^T),\forall x\in\mathcal{X},k\in\{0,1,\cdots,H\}.\label{eq:tilde_psi_equal_to_v}
    \end{align}
    We have
    \begin{align}
        V^\pi([x^T,k]^T)=&\int_{\mathcal{Y}^{k+1}}\left(\sum_{\tau=0}^{k}r([\hat{X}_\tau^T,K_\tau]^T)\right)\tilde{P}^\pi(\hat{Y}_{0:k}|\hat{Y}_0=[x^T,k]^T)d\hat{Y}_{0:k},
    \end{align}
    where $\tilde{P}^\pi(\hat{Y}_{0:k}|\hat{Y}_0=[x^T,k]^T)$ is the conditional distribution of the sequence $\hat{Y}_{0:k}$ given $\hat{Y}_0=[x^T,k]^T$ when the sequence has statistics $\tilde{\mathcal{P}}_{\text{online}}(\hat{Y}_{t+1}|\hat{Y}_t,U_t)$ and a policy $\pi$ is used. Since $r([x^T,k]^T)\neq 0$ only if $k=0$, and $K_\tau=0$ when $\tau=k$ given $K_0=k$, we have
    \begin{align}
        V^\pi([x^T,k]^T)=&\int_{\mathcal{Y}^{k+1}}r([\hat{X}_k^T,0]^T)\tilde{P}^\pi(\hat{Y}_{0:k}|\hat{Y}_0=[x^T,k]^T)d\hat{Y}_{0:k}.
    \end{align}
    Since the distribution of sequence $X_{0:k}$ and the distribution of sequence $K_{0:k}$ are independent, we have
    \begin{align}
        V^\pi([x^T,k]^T)=&\int_{\mathcal{X}^{k+1}}r([\hat{X}_k^T,0]^T)\tilde{P}^\pi_x(\hat{X}_{0:k}|\hat{X}_0=x)d\hat{X}_{0:k}.
    \end{align}
    From \eqref{eq:tilde_psi_expand}, we have
    \begin{align}
        \tilde{\Psi}^\pi(x,H-k)=&\int_{\mathcal{X}^{k+1}}\mathbf{1}\{C(\hat{X}_{H-k})\cap C(\hat{X}_{H-k+1})\cap \cdots \cap C(\hat{X}_H)\}\tilde{P}^\pi_x(\hat{X}_{H-k:H}|\hat{X}_{H-k}=x)d\hat{X}_{H-k:H}\\
        =&\int_{\mathcal{X}^{k+1}}\mathbf{1}\{C(\hat{X}_{0})\cap C(\hat{X}_{1})\cap \cdots \cap C(\hat{X}_k)\}\tilde{P}^\pi_x(\hat{X}_{0:k}|\hat{X}_{0}=x)d\hat{X}_{0:k}.
    \end{align}
    Note that $r([\hat{X}_k^T,0]^T)=1$ iff $C(\hat{X}_\tau)$ occurs for all $\tau\in\{0,1,\cdots,k\}$, which gives $r([\hat{X}_k^T,0]^T)=\mathbf{1}\{C(\hat{X}_{0})\cap C(\hat{X}_{1})\cap \cdots \cap C(\hat{X}_k)\}$. Therefore, we have \eqref{eq:tilde_psi_equal_to_v}.
\end{proof}
\section{Details in Simulation}
\label{sec:simulation_parameters}
The distribution of $W_t$ is given by
\begin{align}
    \mathbb{P}(W_t=0)=\mathbb{P}(W_t=1)=\frac{1}{2}
\end{align}
if $X^1_t\mod 6\geq 3$ and
\begin{align}
    \mathbb{P}(W_t=1)=\mathbb{P}(W_t=2)=\mathbb{P}(W_t=3)=\frac{1}{3}
\end{align}
if $X^1_t\mod 6< 3$. The distribution of $N_t$ is given by
\begin{align}
    \mathbb{P}(N^1_t=-1)=\mathbb{P}(N^1_t=0)=\mathbb{P}(N^1_t=1)=\frac{1}{3}
\end{align}
and
\begin{align}
    \mathbb{P}(N^2_t=-2)=\mathbb{P}(N^2_t=-1)=\mathbb{P}(N^2_t=0)=\mathbb{P}(N^2_t=1)=\mathbb{P}(N^2_t=2)=\frac{1}{5}.
\end{align}
The behavioral policy $\pi^b$ is defined as follows. When $W_t\geq 1$, $X^1_t\mod 10<4$, and $X^2_t\geq 2$, the policy satisfies
\begin{align}
    \pi^b(U_t=-3|X_t,W_t)&=0.5\\
    \pi^b(U_t=-2|X_t,W_t)&=0.4\\
    \pi^b(U_t=-1|X_t,W_t)&=0.05\\
    \pi^b(U_t=0|X_t,W_t)&=0.04\\
    \pi^b(U_t=1|X_t,W_t)&=0.01.
\end{align}
When $W_t\geq 1$, $X^1_t\mod 10\geq 4$, and $X^2_t\geq 4$, the policy satisfies
\begin{align}
    \pi^b(U_t=-3|X_t,W_t)&=0.5\\
    \pi^b(U_t=-2|X_t,W_t)&=0.4\\
    \pi^b(U_t=-1|X_t,W_t)&=0.05\\
    \pi^b(U_t=0|X_t,W_t)&=0.04\\
    \pi^b(U_t=1|X_t,W_t)&=0.01.
\end{align}
When $W_t\geq 2$, $X^1_t\mod 10<4$, and $X^2_t\geq 1$, the policy satisfies
\begin{align}
    \pi^b(U_t=-3|X_t,W_t)&=0.5\\
    \pi^b(U_t=-2|X_t,W_t)&=0.4\\
    \pi^b(U_t=-1|X_t,W_t)&=0.05\\
    \pi^b(U_t=0|X_t,W_t)&=0.04\\
    \pi^b(U_t=1|X_t,W_t)&=0.01.
\end{align}
When $W_t\geq 2$, $X^1_t\mod 10\geq 4$, and $X^2_t\geq 3$, the policy satisfies
\begin{align}
    \pi^b(U_t=-3|X_t,W_t)&=0.5\\
    \pi^b(U_t=-2|X_t,W_t)&=0.4\\
    \pi^b(U_t=-1|X_t,W_t)&=0.05\\
    \pi^b(U_t=0|X_t,W_t)&=0.04\\
    \pi^b(U_t=1|X_t,W_t)&=0.01.
\end{align}
When $W_t\geq 3$, $X^1_t\mod 10<4$, and $X^2_t\geq 2$, the policy satisfies
\begin{align}
    \pi^b(U_t=-3|X_t,W_t)&=0.9\\
    \pi^b(U_t=-2|X_t,W_t)&=0.05\\
    \pi^b(U_t=-1|X_t,W_t)&=0.03\\
    \pi^b(U_t=0|X_t,W_t)&=0.01\\
    \pi^b(U_t=1|X_t,W_t)&=0.01.
\end{align}
When $W_t\geq 3$, $X^1_t\mod 10\geq4$, and $X^2_t\geq 4$, the policy satisfies
\begin{align}
    \pi^b(U_t=-3|X_t,W_t)&=0.9\\
    \pi^b(U_t=-2|X_t,W_t)&=0.05\\
    \pi^b(U_t=-1|X_t,W_t)&=0.03\\
    \pi^b(U_t=0|X_t,W_t)&=0.01\\
    \pi^b(U_t=1|X_t,W_t)&=0.01.
\end{align}
Otherwise, the policy satisfies
\begin{align}
    \pi^b(U_t=-3|X_t,W_t)&=\pi^b(U_t=-2|X_t,W_t)=\pi^b(U_t=-1|X_t,W_t)=\pi^b(U_t=0|X_t,W_t)\nonumber\\
    &=\pi^b(U_t=1|X_t,W_t)=0.2.
\end{align}
For the discrete-time control barrier function, we first represent the safety requirement using $C(X_t)=\mathbf{1}\{X_t\in\mathcal{C}\}$, where $\mathcal{C}=\{x\in\mathbb{Z}^2:h(x)\geq 0\}$, and
\begin{align}
    h([x^1,x^2]^T)=\tanh \left(4.5+\sum_{n\in\{1,3,5,7\}}\frac{4}{n\pi}\sin(-\frac{\pi}{5}n(x^1+0.5))-x^2\right).
\end{align}
We use the safety condition 
\begin{align}
    \mathbb{E}[h(X_{t+1})|X_t,U_t]\geq \alpha h(X_{t})+\delta
\end{align}
with $\alpha = 0.01$ and $\delta =-0.5$, such that the condition \eqref{eq:safety_objective} is guaranteed for $\epsilon=0.2$ due to~\citealt{DBLP:conf/rss/CosnerCTA23}, equation (13).

\end{document}